\begin{document}

\markboth{J. Muñoz-Masqué, J. A. Vallejo}
{Supersymmetric sigma models and harmonic superfunctions}

%
%

\title{SUPERSYMMETRIC SIGMA MODELS AND HARMONIC SUPERFUNCTIONS}

\author{JAIME MUÑOZ-MASQUÉ}

\address{Instituto de F\'{\i}sica Aplicada\\
Consejo Superior de Investigaciones Cient\'{\i}ficas\\
Serrano 141 CP28006 Madrid (Spain)\\
\email{jaime@iec.csic.es} }

\author{JOSÉ ANTONIO VALLEJO}

\address{Facultad de Ciencias\\
Universidad Aut\'{o}noma de San Luis Potos\'{\i} \\
Salvador Nava s/n CP78290 San Luis Potosí (México)\\
jvallejo@fciencias.uaslp.mx }

\maketitle


\begin{abstract}
We study the relation between the Laplacian associated to an odd metric
on a supermanifold and harmonic superfunctions, through the application
of the calculus of variations to a supersymmetric sigma model.
\end{abstract}

\keywords{Berezinian sheaf, harmonic superfunctions, supermanifolds, 
supersigma models.}

\section{Introduction}

In the last years (since the apparition of the Batalin-Vilkovisky
quantization method) there has been a certain amount of interest on
Laplace operators of odd parity on supermanifolds, mainly in connection
with their use in the quantization of non Abelian gauge theories (BRST
symmetries, see \cite{Sch93,Aoy94,KV02,KV041,BatBer08}.) These operators
have been related to odd Poisson structures and odd divergences, modular
classes, etc. (see \cite{Wit90,KosMon02,KV042}.) But we want to focus
here on the ``Riemannian geometric'' side of the problem, and the
relation between Laplace's equation and harmonic functions (for a physical 
interpretation and relation to string theory, see \cite{MV09}.) In the classical
setting, harmonic functions are obtained as follows.
Consider two fixed Riemannian manifolds $(M,g)$ and $(N,h)$. Given
a mapping $\phi \colon (M,g)\to (N,h)$, we can take the pullback
of the metric $h$ by $\phi$ and then raise an index with the inverse
metric $g^{-1}$. If $A=g^{-1}\cdot \phi ^{*}h$ is the resulting bundle
endomorphism $A\colon TM\to TM$, its trace $trA$ is a function
on $M$ which can be integrated (with respect to the Riemannian volume
of $g$) to give the functional
\[
\phi\mapsto \int_{M}\left( trA\right) vol_{g}.
\]
Critical points of this functional are called harmonic mappings (see 
\cite{Eel64,Eel78,Nis02}.) In
the particular case $N=\mathbb{R}$ they are also called harmonic
functions, and the Euler-Lagrange equations for the above functional
are precisely\[
\triangle\phi=0,\]
so that harmonic functions are the solutions to Laplace equations
(where the Laplacian is understood as $\mathrm{div}\circ \mathrm{grad}$.)
In Physics, the construction we have described is known as a sigma model,
and harmonicity plays a very important rôle in their study. In this paper,
we want to show that the relation between harmonicity and the Laplacian
is preserved when passing to the supermanifold setting, and this is
done by applying the methods of the calculus of variations to the
sections of a graded submersion representing supersigma models.

We would like to stress two features of our approach:
\begin{enumerate}
\item We use odd metrics.
\item The superlaplacian $\triangle$ is not trivial, due to the fact that we use
the Berezinian sheaf to define it (compare this with the geometric construction in
\cite{MonSan96}, where any superfunction is harmonic).

\end{enumerate}

To make the paper relatively self-contained, we include two brief sections
on the Berezinian module and graded metrics on supermanifolds. Then we write
our Lagrangian for the supersymmetric sigma model and associate to it
a variational problem, so we can apply the techniques of the graded calculus
of variations (for a detailed account of these techniques, and general definitions,
notations and results on supermanifold theory, see \cite{MonVal03, MMV06} and 
references therein).

\section{$\mathbb{Z}_{2}$-graded metrics}

Let $(M,\mathcal{A})=(M,\wedge\mathcal{E})$ be an $(m|n)$-dimensional
supermanifold given in Batchelor's form, so $\mathcal{E}=\Gamma(E)$
where $E\rightarrow M$ is a vector bundle. We denote by
$\{\frac{\partial}{\partial x^{k}}\}_{k=1}^{m}$
a local frame on $M$, and by $\{x^{-j}\}_{j=1}^{n}$ a local frame
of sections of $\mathcal{E}$. By $\sim $ we will understand the structural
morphism of the supermanifold, $\sim \colon \mathcal{A} \to C^{\infty}(M)$.

As in \cite{MonSan97}, a $\mathbb{Z}_{2}$-graded metric (or supermetric)
on $\wedge\mathcal{E}$ is understood to be a graded symmetric, non
degenerate $\wedge\mathcal{E}$-bilinear map
$G\colon\mathrm{Der}\wedge\mathcal{E}\times\mathrm{Der}
\wedge\mathcal{E}\to \wedge\mathcal{E}$,
whose action on a pair $(D_{1},D_{2})\in\mathrm{Der}
\wedge\mathcal{E}\times\mathrm{Der}\wedge\mathcal{E}$
is denoted $\left\langle D_{1},D_{2};G\right\rangle $ (for another approach to
supermetrics, based on principal superbundles, see \cite{Sar08}.) We have

\begin{enumerate}
\item $\left\langle \alpha D_{1},D_{2};G\right\rangle
=\alpha\left\langle D_{1},D_{2};G\right\rangle $,
$\forall\alpha\in\wedge\mathcal{E}$.
\item $\left\langle D_{1},D_{2};G\right\rangle
=(-1)^{|D_{1}||D_{2}|}\left\langle D_{2},D_{1};G\right\rangle $.
\item The map
$G^{\flat}\colon D\mapsto\left\langle D,.;G\right\rangle $
is an isomorphism between the $\wedge\mathcal{E}$-modules
$\mathrm{Der}\wedge\mathcal{E}$ and
$\mathrm{Hom}(\mathrm{Der}\wedge\mathcal{E},\wedge\mathcal{E})
=\Omega_{G}^{1}(M)$.
The inverse of this isomorphism is denoted by $G^{\sharp}$, as in
the non-graded setting. Also, the induced metric on $\Omega_{G}^{1}(M)$
is denoted by $G^{-1}$.
\end{enumerate}
Note that these conditions imply
$\left\langle D_{1},\alpha D_{2};G\right\rangle
=(-1)^{|\alpha||D_{1}|}\left\langle D_{1},D_{2};G\right\rangle $.
Also, we remark that graded forms have a right $\wedge\mathcal{E}$-module
structure, so we will take care in writting the $\wedge\mathcal{E}$
factors to the left of graded vector fields and to the right of graded
forms.

A graded metric is even (resp.\ odd) if,
$\left\vert \left\langle D_{1},D_{2};G\right\rangle \right\vert
+\left\vert D_{1}\right\vert +\left\vert D_{2}\right\vert \equiv 0\mathrm{mod}2$
(resp.\ $\equiv 1\mathrm{mod}2$). In both cases the graded metric is called
homogeneous. The associated concept of a graded connection is defined
analogously to the non-graded case.

We focus our attention to graded metrics of second-order depth. This
means that there is a graded basis (a set of generators of the locally
finite $\mathcal{\wedge E}$-module $\mathrm{Der}\wedge\mathcal{E}$)
$\{D_{\alpha}\}_{\alpha=1}^{m+n}$ such that,
\[
\left\langle D_{i},D_{j};G\right\rangle \in \sum\limits _{0\leq k\leq2}
\wedge^{k}\mathcal{E}.
\]
In \cite[Proposition 4.1]{MonSan97} the following is proved:\\
Let $\left\langle \cdot,\cdot\right\rangle $ be a homogeneous graded
metric of second order depth which is adapted to the canonical splitting
of $\wedge\mathcal{E}$. A connection $\nabla$ exists on $M$ such
that,

\begin{equation}
\langle \cdot,\cdot;G\rangle =
 \begin{cases}
  \left( 
   \begin{array}{cc}
   g & 0\\
   0 & \omega
   \end{array}
  \right)& \; with \; \nabla g=0 \; (even \; case)\\
  \left( 
   \begin{array}{cc}
   0 & \kappa\\
   \kappa^{t} & 0
   \end{array} \right)& \; (odd \; case,)
 \end{cases}
\end{equation}
with respect to the basis
$\left\{ \nabla_{\frac{\partial}{\partial x^{k}}},
i_{\frac{\partial}{\partial x^{-j}}}\right\} _{1\leq k\leq m}^{1\leq j\leq n}$
of $\mathrm{Der}\wedge\mathcal{E}$, where $g$ is a metric on $TM$, $\omega $
is a symplectic form on $M$, and
$\kappa\colon\mathcal{X}(M)\to \mathcal{E}^{\ast}$
is a non-degenerate linear map.

If $d^{G}$ denotes the graded exterior derivative acting on the algebra
of graded forms $\Omega_{G}(M)$---in order to distinguish it from
the usual exterior derivative $d$, which acts on $\Omega(M)$---and
$\{d^{G}x^{k},d^{G}x^{-j}\} _{1\leq k\leq m}^{1\leq j\leq n}$ is the dual
basis on $\Omega_{G}^{1}(M)$ to the basis of derivations
$\left\{ \nabla_{\frac{\partial}{\partial x^{k}}},
i_{\frac{\partial}{\partial x^{-j}}}\right\} _{1\leq k\leq m}^{1\leq j\leq n}$,
we can write
\[
\left\langle .,.;G\right\rangle =d^{G}x^{i}\otimes d^{G}x^{-j}\cdot\kappa_{ij}
+d^{G}x^{-i}\otimes d^{G}x^{j}\cdot\kappa_{ji}.\]
 for any odd metric on $(M,\wedge\mathcal{E})$. Note that, in the
case $\mathcal{E}=\Gamma(TM)$, we can make the identifications
$\nabla_{\frac{\partial}{\partial x^{k}}}\equiv\frac{\partial}{\partial x^{k}}$
and $i_{\frac{\partial}{\partial x^{-j}}}=\frac{\partial}{\partial x^{-j}}$
as differential operators on the $C^{\infty}(M)$-algebra $\wedge\mathcal{E}$.

\begin{example}
\label{example1}
Consider the linear supermanifold
$\mathbb{R}^{1|1}=(\mathbb{R},\Omega(\mathbb{R}))$ with global supercoordinates
$\{t,\tau\}$, $|t|=0$, $|\tau|=1$. The canonical odd Euclidean
supermetric can be written as
\begin{equation}
Q=d^{G}t\otimes d^{G}\tau+d^{G}\tau\otimes d^{G}t.\label{hmetric}
\end{equation}
 Note that $\mathbb{R}^{1|1}$ admits no even supermetrics (this would
require the base manifold being even dimensional). \end{example}

\begin{example}
\label{example2} Let $\mathcal{E}=\Gamma(TM)$ and
let $g$ be a metric on $M$. A second-order depth odd supermetric
$G$ on $(M,\wedge\mathcal{E})$ can naturally be defined by simply
taking as $\kappa$ the induced isomorphism
$g\colon\mathcal{X}(M)\to \Omega^{1}(M)$.
In local coordinates:
\begin{equation}
G=d^{G}x^{i}\otimes d^{G}x^{-j}\cdot g_{ij}
+d^{G}x^{-i}\otimes d^{G}x^{j}\cdot g_{ji},
\label{gmetric}
\end{equation}
 where $g_{ij}$ is the matrix of $g$ with respect to the local frame
$\{\frac{\partial}{\partial x^{k}}\}_{k=1}^{m}$ on $M$.
\end{example}

\section{The Berezinian sheaf and divergence}\label{Bersheaf}

Let $(M,\mathcal{A})$ be a graded manifold, of dimension $(m|n)$, and let
$\mathrm{P}^{k}(\mathcal{A})$ be the sheaf of graded $k-$order differential
operators of $\mathcal{A}$. This is the submodule
of $\mathrm{End}(\mathcal{A})$ whose elements $P$ verify
\[
\lbrack...[[P,a_{0}],a_{1}],...,a_{k}]=0,
\]
for all $a_{0},...,a_{k}\in\mathcal{A}$ (here we identify an $a\in\mathcal{A}$
with the endomorphism $b\mapsto ab$).

One has that if $\{x^{i},x^{-j}\}_{1\leq j\leq n}^{1\leq i\leq m}$ are
supercoordinates for a splitting neighborhood $U\subset M$,
$\mathrm{P}^{k}(\mathcal{A}(U))$ is a free module (for both structures,
left and right) with basis
{\footnotesize
\[
\frac{\partial^{|\alpha|}}{\partial x^{^{\alpha}}}\circ\frac{\partial
^{|\beta|}}{\partial x^{-\beta}}
=\left(
\frac{\partial}{\partial x^{1}}\right) ^{\alpha_{1}}
\circ\cdots\circ
\left(
\frac{\partial }{\partial x^{m}}\right) ^{\alpha_{m}}\circ
\left( \frac{\partial}{\partial x^{-1}}\right) ^{\beta_{1}}
\circ \cdots \circ
\left(  \frac{\partial }{\partial x^{-n}}\right) ^{\beta_{n}},
\]
} where $|\alpha|+|\beta|\leq k$.

Let us consider the sheaf $\mathrm{P}^{k}(\mathcal{A},\Omega_{G}^{m})
=\Omega_{G}^{m}\otimes_{\mathcal{A}}\mathrm{P}^{k}(\mathcal{A})$,
of $m-$form valued $k-$th order differential operators of $\mathcal{A}$.
For every open subset $U\subset M$, let $\mathrm{K}^{n}(U)$ be the set
of operators $P\in\mathrm{P}^{n}(\mathcal{A}(U),\Omega_{G}^{m}(U))$
such that for every $a\in\mathcal{A}(U)$ with compact support, there exists
an ordinary $(m-1)-$form of compact support, $\omega $, fulfilling
$\widetilde{P(a)}=d\omega$. We observe that $\mathrm{K}^{n}$
is a submodule of $\mathrm{P}^{n}(\mathcal{A},\Omega_{G}^{m})$
for its\emph{\ right structure}, so we can take quotients
and obtain the following description of the Berezinian sheaf:
\begin{equation}
\mathrm{Ber}(\mathcal{A})
=\mathrm{P}^{n}(\mathcal{A},\Omega_{G}^{m}) \diagup \mathrm{K}^{n}.
\end{equation}

According to this description, a local basis of $\mathrm{Ber}(\mathcal{A})$
can be given explicitly: If $\{x^{i},x^{-j}\}_{1\leq j\leq n}^{1\leq i\leq m}$
are supercoordinates for a splitting neighborhood $U\subset M$, the local
sections of the Berezinian sheaf are written in the form
\begin{equation}
\Gamma_{U}(\mathrm{Ber}(\mathcal{A}))
=\left[
d^{G}x^{1}\wedge \cdots \wedge d^{G}x^{m}\otimes
\frac{\partial}{\partial x^{-1}}\circ\cdots \circ
\frac{\partial}{\partial x^{-n}}
\right]
\cdot\mathcal{A}(U),
\end{equation}
where $[$ $]$ stands for the equivalence class modulo $\mathrm{K}^{n}$.

Now, if $X$ is a graded vector field, it is possible to define the notion
of graded Lie derivative of sections of the Berezinian sheaf with respect
to $X$. This is the mapping
\begin{equation}
\mathcal{L}_{X}^{G}\colon \Gamma(\mathrm{Ber}(\mathcal{A}))
\longrightarrow \Gamma(\mathrm{Ber}(\mathcal{A}))
\end{equation}
given by
\begin{equation}
\mathcal{L}_{X}^{G}[\eta^{G}\otimes P]
=(-1)^{|X||\eta^{G}\otimes P|+1}[\eta^{G}\otimes P\circ X],
\end{equation}
for $X$ and $\eta^{G}\otimes P$ homogeneous.

This Lie derivative, has the properties that one would expect (cfr. the treatment
in \cite{QFT99}, Vol. 1 pg. $83$):

\begin{enumerate}
\item For homogeneous $X\in\mathrm{Der}(\mathcal{A})$,
$\xi\in\Gamma (\mathrm{Ber}(\mathcal{A}))$ and $a\in\mathcal{A}$,
\[
\mathcal{L}_{X}^{G}(\xi\cdot a)
=\mathcal{L}_{X}^{G}(\xi)\cdot a+(-1)^{|X||\xi |}\xi\cdot X(a).
\]

\item For homogeneous $X\in\mathrm{Der}(\mathcal{A})$,
$\xi \in \Gamma (\mathrm{Ber}(\mathcal{A}))$ and $a\in\mathcal{A}$,
\[
\mathcal{L}_{a\cdot X}^{G}(\xi)
=(-1)^{|X||\xi|}\mathcal{L}_{X}^{G}(\xi\cdot a).
\]

\item If $\xi_{x^{i},x^{-j}}=[d^{G}x^{1}\wedge \cdots
\wedge d^{G}x^{m}\otimes\frac{\partial}{\partial x^{-1}}
\circ \cdots \circ \frac{\partial}{\partial x^{-n}}]$
is the local generator of the Berezinian sheaf
on a system of supercoordinates
$\{x^{i},x^{-j}\}_{1\leq j\leq n}^{1\leq i\leq m}$, then
\[
\mathcal{L}_{\frac{\partial}{\partial x^{i}}}^{G}(\xi_{x^{i},x^{-j}})
=0=\mathcal{L}_{\frac{\partial}{\partial x^{-j}}}^{G}(\xi_{x^{i},x^{-j}}).
\]

\end{enumerate}

We can now introduce the notion of Berezinian divergence: Let $(M,\mathcal{A})$
be a graded manifold whose Berezinian sheaf is generated
by a section $\xi$. The graded function $\mathrm{div}_{B}^{\xi}(X)$ given by
the formula (for homogeneous $X$)
\[
\mathcal{L}_{X}^{G}(\xi)=(-1)^{|X||\xi|}\xi\cdot\mathrm{div}_{B}^{\xi}(X)
\]
is called the Berezinian divergence of $X$ with respect to $\xi$. When there
is no risk of confusion, we will write simply $\mathrm{div}_{B}(X)$.

\section{$(1|1)$-supersymmetric sigma model}

Below, we consider a model with target $\mathbb{R}^{1|1}$, that is,
a mapping (scalar superfield)
$\sigma \colon(M,\wedge\mathcal{E})\to \mathbb{R}^{1|1}$,
where we consider an odd supermetric on $(M,\wedge\mathcal{E})$ as
in Example \ref{example2} and the canonical metric of Example \ref{example1}
on $\mathbb{R}^{1|1}$.

This imply that $\mathcal{E}\cong\Gamma(TM)$, but this is not a great
loss of generality: If $(M,\wedge\mathcal{E})$ admits an odd metric,
the existence of the non-degenerate pairing
$\kappa\colon\mathcal{X}(M)\rightarrow\mathcal{E}^{\ast}$
implies that the dimension of the supermanifold is $(n|n)$.

The mapping $\sigma$ can be viewed as a section of the graded submersion
\[
p\colon\mathbb{R}^{1|1}\times(M,\wedge\mathcal{E})\to (M,\wedge\mathcal{E}),
\]
to which we associate a super-Lagrangian $L\in\mathcal{A}_{J_{G}^{1}(p)}$
(see \cite{MMV06} for the details of this construction) proceeding by
analogy with the non-graded case (which is that of harmonic functions).
The coordinates in the bundle of superjets $J_{G}^{1}(p)$ will be denoted
$\{t,\tau,x^{i},x^{-j},t_{i},t_{-j},\tau_{i},\tau_{-j}\}$.
Thus, by taking the pull-back (we write $x^{\alpha}$ collectively
for $x^{i}$ and $x^{-j}$, i.e., $\alpha$ runs from $-n,...,-1$
to $1,...,m$) of (\ref{hmetric}) along $\sigma$, we obtain
\begin{eqnarray*}
\sigma^{\ast}Q & = & (-1)^{\alpha\beta}d^{G}x^{\alpha}
\otimes d^{G}x^{\beta}\cdot\frac{\partial(t\circ\sigma)}{\partial x^{\alpha}}
\frac{\partial(\tau\circ\sigma)}{\partial x^{\beta}}\\
& + & (-1)^{(\gamma+1)\delta}d^{G}x^{\gamma}\otimes d^{G}x^{\delta}\cdot
\frac{\partial(\tau\circ\sigma)}{\partial x^{\gamma}}
\frac{\partial(t\circ\sigma)}{\partial x^{\delta}}.
\end{eqnarray*}
As $|-j|=1$, $|k|=0$, and so on, we could write
matricially\footnote{Upper left corner
corresponds to the factor of $d^{G}x^{i}\otimes d^{G}x^{k}$,
upper right to $d^{G}x^{j}\otimes d^{G}x^{-l}$, lower left
to $d^{G}x^{-j}\otimes d^{G}x^{k}$,
and lower right to $d^{G}x^{-j}\otimes d^{G}x^{-l}$.
},
\[
\sigma^{\ast}Q=\! \left( \;
\begin{array}{cc}
\frac{\partial(t\circ\sigma)}{\partial x^{i}}
\frac{\partial(\tau\circ\sigma)}{\partial x^{k}}
+\frac{\partial(\tau\circ\sigma)}{\partial x^{i}}
\frac{\partial(t\circ\sigma)}{\partial x^{k}}
& \!\! \frac{\partial(t\circ\sigma)}{\partial x^{j}}
\frac{\partial(\tau\circ\sigma)}{\partial x^{-l}}
-\frac{\partial(\tau\circ\sigma)}{\partial x^{j}}
\frac{\partial(t\circ\sigma)}{\partial x^{-l}}\\
\frac{\partial(t\circ\sigma)}{\partial x^{-j}}
\frac{\partial(\tau\circ\sigma)}{\partial x^{k}}
+\frac{\partial(\tau\circ\sigma)}{\partial x^{-j}}
\frac{\partial(t\circ\sigma)}{\partial x^{k}}
& \!\! -\frac{\partial(t\circ\sigma)}{\partial x^{-j}}
\frac{\partial(\tau\circ\sigma)}{\partial x^{-l}}
+\frac{\partial(\tau\circ\sigma)}{\partial x^{-j}}
\frac{\partial(t\circ\sigma)}{\partial x^{-l}}
\end{array}
\; \right).
\]
 Also, from (\ref{gmetric}) we have
 $G^{-1}=g^{ij}\cdot\frac{\partial}{\partial x^{i}}\otimes
 \frac{\partial}{\partial x^{-j}}+g^{kl}\cdot
 \frac{\partial}{\partial x^{k}}\otimes\frac{\partial}{\partial x^{l}}$,
or equivalently, {\tiny $G^{-1}=\left(\!
\begin{array}{cc}
\!0\! & \! g^{ij}\!\\
\! g^{ij}\! & \!0\!\end{array}
\!\right)$}.

We would like to evaluate the action of $G^{-1}$ on $\sigma^{\ast}Q$
in such a way that the functorial correspondence with the composition
of graded morphisms be preserved. To this end, we must use either
the action of the metric $G^{-1}$ on $\sigma^{\ast}Q$ or the notion
of matrix product developed in \cite{San88}; the result is the following:
{\small
\begin{eqnarray*}
G^{-1}\cdot \sigma^{\ast}Q = & & \\
\left( \; \begin{array}{cc}
g^{ij}\left(\frac{\partial(t\circ\sigma)}{\partial x^{-j}}
\frac{\partial(\tau\circ\sigma)}{\partial x^{k}}
+\frac{\partial(\tau\circ\sigma)}{\partial x^{-j}}
\frac{\partial(t\circ\sigma)}{\partial x^{k}}\right)
& \!\! g^{ij}\left(-\frac{\partial(t\circ\sigma)}{\partial x^{-j}}
\frac{\partial(\tau\circ\sigma)}{\partial x^{-l}}
+\frac{\partial(\tau\circ\sigma)}{\partial x^{-j}}
\frac{\partial(t\circ\sigma)}{\partial x^{-l}}\right)\\
g^{ij}\left(\frac{\partial(t\circ\sigma)}{\partial x^{i}}
\frac{\partial(\tau\circ\sigma)}{\partial x^{k}}
+\frac{\partial(\tau\circ\sigma)}{\partial x^{i}}
\frac{\partial(t\circ\sigma)}{\partial x^{k}}\right)
& \!\! g^{ij}\left(\frac{\partial(t\circ\sigma)}{\partial x^{j}}
\frac{\partial(\tau\circ\sigma)}{\partial x^{-l}}
-\frac{\partial(\tau\circ\sigma)}{\partial x^{j}}
\frac{\partial(t\circ\sigma)}{\partial x^{-l}}
\right)
\end{array}
\; \right) \!.
\end{eqnarray*}
}

If we were to take the supertrace of this supermatrix, we would obtain
$0$, as the following computation shows:
\begin{eqnarray*}
Str(G^{-1}\cdot\sigma^{\ast}Q)
& = & Tr\left(g^{ij}\left( \frac{\partial(t\circ\sigma)}{\partial x^{-j}}
\frac{\partial(\tau\circ\sigma)}{\partial x^{k}}
+\frac{\partial(\tau\circ\sigma)}{\partial x^{-j}}
\frac{\partial(t\circ\sigma)}{\partial x^{k}}\right)\right)\\
& - & Tr\left(g^{ij}\left(\frac{\partial(t\circ\sigma)}{\partial x^{j}}
\frac{\partial(\tau\circ\sigma)}{\partial x^{-l}}
-\frac{\partial(\tau\circ\sigma)}{\partial x^{j}}
\frac{\partial(t\circ\sigma)}{\partial x^{-l}}\right)\right)\\
& = & g^{ij}\left(\frac{\partial(t\circ\sigma)}{\partial x^{-j}}
\frac{\partial(\tau\circ\sigma)}{\partial x^{i}}
+\frac{\partial(\tau\circ\sigma)}{\partial x^{-j}}
\frac{\partial(t\circ\sigma)}{\partial x^{i}}\right)\\
& - & g^{kl}\left(\frac{\partial(t\circ\sigma)}{\partial x^{l}}
\frac{\partial(\tau\circ\sigma)}{\partial x^{-k}}
-\frac{\partial(\tau\circ\sigma)}{\partial x^{l}}
\frac{\partial(t\circ\sigma)}{\partial x^{-k}}\right)\\
& = & g^{ij}\frac{\partial(t\circ\sigma)}{\partial x^{-j}}
\frac{\partial(\tau\circ\sigma)}{\partial x^{i}}+g^{kl}
\frac{\partial(\tau\circ\sigma)}{\partial x^{l}}
\frac{\partial(t\circ\sigma)}{\partial x^{-k}}\\
& + & g^{ij}\frac{\partial(\tau\circ\sigma)}{\partial x^{-j}}
\frac{\partial(t\circ\sigma)}{\partial x^{i}}-g^{kl}
\frac{\partial(t\circ\sigma)}{\partial x^{l}}
\frac{\partial(\tau\circ\sigma)}{\partial x^{-k}}\\
& = & g^{ij}\frac{\partial(t\circ\sigma)}{\partial x^{-j}}
\frac{\partial(\tau\circ\sigma)}{\partial x^{i}}-g^{kl}
\frac{\partial(t\circ\sigma)}{\partial x^{-k}}
\frac{\partial(\tau\circ\sigma)}{\partial x^{l}}\\
& + & g^{ij}\frac{\partial(\tau\circ\sigma)}{\partial x^{-j}}
\frac{\partial(t\circ\sigma)}{\partial x^{i}}-g^{kl}
\frac{\partial(\tau\circ\sigma)}{\partial x^{-k}}
\frac{\partial(t\circ\sigma)}{\partial x^{l}}=0.
\end{eqnarray*}

Due to this, instead of $Str$ we will take the usual contraction
of $(1,1)$-supertensors, which amounts to
\begin{eqnarray*}
C_{1}^{1}(G^{-1}\cdot\sigma^{\ast}Q)
& = & Tr\left( g^{ij}\left( \frac{\partial(t\circ\sigma)}{\partial x^{-j}}
\frac{\partial(\tau\circ\sigma)}{\partial x^{k}}
+\frac{\partial(\tau\circ\sigma)}{\partial x^{-j}}
\frac{\partial(t\circ\sigma)}{\partial x^{k}}\right)\right)\\
& + & Tr\left(g^{ij}\left( \frac{\partial(t\circ\sigma)}{\partial x^{j}}
\frac{\partial(\tau\circ\sigma)}{\partial x^{-l}}
-\frac{\partial(\tau\circ\sigma)}{\partial x^{j}}
\frac{\partial(t\circ\sigma)}{\partial x^{-l}}\right)\right)\\
& = & g^{ij}\left( \frac{\partial(t\circ\sigma)}{\partial x^{-j}}
\frac{\partial(\tau\circ\sigma)}{\partial x^{i}}
+\frac{\partial(\tau\circ\sigma)}{\partial x^{-j}}
\frac{\partial(t\circ\sigma)}{\partial x^{i}}\right)\\
& + & g^{kl}\left(\frac{\partial(t\circ\sigma)}{\partial x^{l}}
\frac{\partial(\tau\circ\sigma)}{\partial x^{-k}}
-\frac{\partial(\tau\circ\sigma)}{\partial x^{l}}
\frac{\partial(t\circ\sigma)}{\partial x^{-k}}\right)\\
& = & g^{ij}\frac{\partial(t\circ\sigma)}{\partial x^{-j}}
\frac{\partial(\tau\circ\sigma)}{\partial x^{i}}-g^{kl}
\frac{\partial(\tau\circ\sigma)}{\partial x^{l}}
\frac{\partial(t\circ\sigma)}{\partial x^{-k}}\\
& + & g^{ij}\frac{\partial(\tau\circ\sigma)}{\partial x^{-j}}
\frac{\partial(t\circ\sigma)}{\partial x^{i}}
+g^{kl}\frac{\partial(t\circ\sigma)}{\partial x^{l}}
\frac{\partial(\tau\circ\sigma)}{\partial x^{-k}}\\
& = & 2g^{ij}\frac{\partial(t\circ\sigma)}{\partial x^{-j}}
\frac{\partial(\tau\circ\sigma)}{\partial x^{i}}+2g^{ij}
\frac{\partial(t\circ\sigma)}{\partial x^{i}}
\frac{\partial(\tau\circ\sigma)}{\partial x^{-j}}.
\end{eqnarray*}
That is, we have arrived at the result
\begin{equation}
\frac{1}{2}C_{1}^{1}(G^{-1}\cdot\sigma^{\ast}Q)
=g^{ij}\left(\frac{\partial(t\circ\sigma)}{\partial x^{-j}}
\frac{\partial(\tau\circ\sigma)}{\partial x^{i}}
+\frac{\partial(t\circ\sigma)}{\partial x^{i}}
\frac{\partial(\tau\circ\sigma)}{\partial x^{-j}}\right) ,
\label{contraction}
\end{equation}
 and this suggests to take the following superlagrangian
 $L\in\mathcal{A}_{J_{G}^{1}(p)}$
to study the supersymmetric sigma model (where we introduce the obvious
notations $t_{\alpha}=\frac{\partial(t\circ\sigma)}{\partial x^{\alpha}}$,
$\tau_{\beta}=\frac{\partial(\tau\circ\sigma)}{\partial x^{\beta}}$):
\begin{equation}
L=g^{ij}(t_{-j}\tau_{i}+t_{i}\tau_{-j}).\label{sigmalag}
\end{equation}

\begin{remark} The superlagrangian $L$ is homogeneous of even degree.
In Physics literature, it is common to write $t\circ\sigma=\phi$
and $\tau\circ\sigma=\psi$, so equation (\ref{contraction}) would
be
\[
\frac{1}{2}C_{1}^{1}(G^{-1}\cdot\sigma^{\ast}Q)
=g^{ij}\left( \frac{\partial\phi}{\partial x^{-j}}
\frac{\partial\psi}{\partial x^{i}}+\frac{\partial\phi}{\partial x^{i}}
\frac{\partial\psi}{\partial x^{-j}}\right)
\]
 and the Lagrangian density
 \begin{equation}
L=g^{ij}(\phi_{-j}\psi_{i}+\phi_{i}\psi_{-j}).\label{sigmalag2}
\end{equation}

\end{remark}

\section{The associated variational problem and harmonic superfunctions}

Next, we want to study the Euler-Lagrange equations corresponding
to the Berezinian problem defined by $L$. To this end, we recall
that $G$ determines a global section $\xi_{G}$ of the Berezinian
sheaf $\mathrm{Ber}(\mathcal{A})=\mathrm{Ber}(\wedge\mathcal{E})$,
which is called the Riemannian Berezinian volume element associated
to $G$. In coordinates:
\[
\xi_{G}=\left[d^{G}x^{1}\wedge\cdots\wedge d^{G}x^{m}\otimes
\frac{\partial}{\partial x^{-1}}\circ\cdots\circ
\frac{\partial}{\partial x^{-n}}\right]|G|
\]
 where $|G|=\sqrt{Ber(G_{\alpha\beta})}$ and $Ber$ is the Berezinian
determinant (note that $|G|$ is even and it only depends on the even
coordinates $x^{i}$). The Euler-Lagrange equations are
\begin{eqnarray*}
\frac{\partial\lambda}{\partial t}-\frac{d}{dx^{i}}
\frac{\partial\lambda}{\partial t_{i}}-\frac{d}{dx^{-j}}
\frac{\partial\lambda}{\partial t_{-j}} & = &
0\mbox{ , }1\leq i\leq m,1\leq j\leq n\\
\frac{\partial\lambda}{\partial\tau}
-\frac{d}{dx^{i}}\frac{\partial\lambda}{\partial\tau_{i}}
+\frac{d}{dx^{-j}}\frac{\partial\lambda}{\partial\tau_{-j}}
& = & 0\mbox{ , }1\leq i\leq m,1\leq j\leq n,
\end{eqnarray*}
 with $\lambda=|G|\cdot L$.

Taking the explicit expression (\ref{sigmalag}) or (\ref{sigmalag2})
into account, it is readily seen that these equations reduce respectively to
\begin{eqnarray}\label{eulero}
\frac{1}{|G|}\frac{\partial|G|}{\partial x^{k}}g^{kj}\tau_{-j}
+\frac{\partial g^{kj}}{\partial x^{k}}\tau_{-j}
+g^{kj}\tau_{k,-j}+g^{kj}\tau_{-kj}
& = & 0 \nonumber \\
\frac{1}{|G|}\frac{\partial|G|}{\partial x^{k}}g^{kj}t_{-j}
+\frac{\partial g^{kj}}{\partial x^{k}}t_{-j}+g^{kj}t_{k,-j}+g^{kj}
& = & 0.
\end{eqnarray}

Classically, the Euler-Lagrange variational equations for the sigma model with
target manifold $\mathbb{R}$ are those of the harmonic functions,
characterized by $\Delta f=0$, where $\Delta$ is the ordinary Laplacian:
$\Delta f=\mathrm{div}(\mathrm{grad}f)$ (see \cite{Raw84,Nis02}). We would like to show that
this is still true in the graded setting, that is, that equations
(\ref{eulero}) when evaluated on sections are the local expression
of the harmonic superfunctions, aside from constant factors. Of course,
to define the super-Laplacian $\Delta$ one must specify first what
is meant by div and grad in a supermanifold.

If $f\in\wedge\mathcal{E}$ is a superfunction on $(M,\wedge\mathcal{E})$
with graded metric $G$, its gradient is defined as the supervector
field given by $ \left\langle \mathrm{grad}f,D\right\rangle =D(f)$,
for all $D\in\mathrm{Der}(\wedge \mathcal{E})$.

\begin{proposition} For a superfunction $f\in\wedge\mathcal{E}$,
the following local expression holds true,
\begin{equation}
\mathrm{grad}f=g^{ij}\frac{\partial f}{\partial x^{-j}}
\frac{\partial}{\partial x^{i}}+g^{kl}\frac{\partial f}{\partial x^{k}}
\frac{\partial}{\partial x^{-l}}.\label{gradient}
\end{equation}
\end{proposition}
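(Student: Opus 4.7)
The plan is to check that the vector field on the right-hand side of (\ref{gradient}) satisfies the defining identity $\langle\mathrm{grad}\,f,D;G\rangle = D(f)$. Uniqueness of $\mathrm{grad}\,f$ is automatic from $G^{\flat}$ being an isomorphism (axiom 3 of the supermetric), so only existence and the explicit form are at stake. Since both sides of the defining identity are $\wedge\mathcal{E}$-linear in $D$, it suffices to test it on the basis derivations $\partial/\partial x^{k}$ and $\partial/\partial x^{-l}$.

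First I would compute the matrix of $G$ in this basis. Starting from the explicit expression
\[
G = d^{G}x^{i}\otimes d^{G}x^{-j}\cdot g_{ij} + d^{G}x^{-i}\otimes d^{G}x^{j}\cdot g_{ji}
\]
recalled in Example \ref{example2}, and using the dual pairing $d^{G}x^{\alpha}(\partial/\partial x^{\beta})=\delta^{\alpha}_{\beta}$, one checks that the only nonvanishing entries are $\langle\partial/\partial x^{k},\partial/\partial x^{-l};G\rangle = g_{kl}$ and $\langle\partial/\partial x^{-k},\partial/\partial x^{l};G\rangle = g_{lk}$. The vanishing of $\langle\partial/\partial x^{i},\partial/\partial x^{j};G\rangle$ and $\langle\partial/\partial x^{-i},\partial/\partial x^{-j};G\rangle$ is just the statement that $G$ is odd.

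Next I would plug the candidate expression for $\mathrm{grad}\,f$ into the first slot of $G$, using axiom 1 to factor the coefficients out without any sign. Pairing against $\partial/\partial x^{k}$ annihilates the first summand of the ansatz and leaves $g^{ab}g_{kb}\,(\partial f/\partial x^{a}) = \partial f/\partial x^{k}$, while pairing against $\partial/\partial x^{-l}$ annihilates the second summand and leaves $g^{ij}g_{il}\,(\partial f/\partial x^{-j}) = \partial f/\partial x^{-l}$. In each case the output coincides with $D(f)$, so the identity holds on the basis and, by $\wedge\mathcal{E}$-linearity, on every derivation.

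The main point demanding care will be the parity bookkeeping: the coefficient $\partial f/\partial x^{-j}$ carries parity $|f|+1$ because $\partial/\partial x^{-j}$ is odd, so the overall parity of $\mathrm{grad}\,f$ is opposite to that of $f$, in agreement with $G$ being odd. Since the coefficients appear to the left of the basis derivations and are extracted from the left slot of $G$, no Koszul signs are incurred, and the symmetry $g_{ij}=g_{ji}$ of the ordinary Riemannian tensor on $M$ is all that is needed to close the computation.
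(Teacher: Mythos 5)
Your proposal is correct and follows essentially the same route as the paper's proof: both reduce the defining identity $\langle\mathrm{grad}\,f,D;G\rangle=D(f)$ to the coordinate basis and exploit the off-diagonal block form of $G$ from (\ref{gmetric}), the only cosmetic difference being that the paper solves for unknown coefficients $A^{\alpha}$ while you verify the stated formula directly (with uniqueness supplied by $G^{\flat}$ being an isomorphism).
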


\begin{proof} Indeed, making use of the $\wedge\mathcal{E}$-bilinearity
of $G$ and the explicit form (\ref{gmetric}), if
$\mathrm{grad}f=A^{\alpha}\frac{\partial}{\partial x^{\alpha}}$,
then
\[
\frac{\partial f}{\partial x^{-j}}
=\left\langle \mathrm{grad}f,\frac{\partial}{\partial x^{-j}}\right\rangle
=A^{\alpha}\left\langle \frac{\partial}{\partial x^{\alpha}},
\frac{\partial}{\partial x^{-j}}\right\rangle =A^{i}g_{ij}.
\]
 Hence $A^{i}=g^{ij}\frac{\partial f}{\partial x^{-j}}$, and similarly
for $A^{-l}=g^{kl}\frac{\partial f}{\partial x^{k}}$.
\end{proof}

Next, we study the divergence. If we consider the Riemannian Berezinian
$\xi_{G}=\xi|G|$ and compute the divergence
of $D\in\mathrm{Der}(\wedge\mathcal{E})$
with respect to it, we obtain the following (recall the properties
of the Lie derivative of sections of the Berezinian sheaf from section
\ref{Bersheaf}):

\begin{proposition} For any $D\in\mathrm{Der}(\wedge\mathcal{E})$,
we have\begin{equation}
\mathrm{div}D=(-1)^{|D||\xi|}\frac{1}{|G|}
\frac{\partial}{\partial x^{\alpha}}(|G|\cdot D^{\alpha}).
\label{divergence}\end{equation}
\end{proposition}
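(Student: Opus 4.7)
The plan is to work in the local supercoordinates $\{x^{i},x^{-j}\}$ of a splitting neighbourhood, write $D=D^{\alpha}\,\partial/\partial x^{\alpha}$ and $\xi_{G}=\xi\cdot|G|$, and then unfold $\mathcal{L}^{G}_{D}(\xi_{G})$ by combining the three properties of the graded Lie derivative recorded in Section~\ref{Bersheaf}. Comparing the result to the defining relation
\[
\mathcal{L}^{G}_{D}(\xi_{G}) = (-1)^{|D||\xi_{G}|}\,\xi_{G}\cdot \mathrm{div}_{B}(D),
\]
and noting that $|G|=\sqrt{\mathrm{Ber}(G_{\alpha\beta})}$ is even, so $|\xi_{G}|=|\xi|$, one can then solve for $\mathrm{div}_{B}(D)$.

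First I would apply Leibniz (property~1) with $a=|G|$ to split $\mathcal{L}^{G}_{D}(\xi\cdot|G|)$ into a $\mathcal{L}^{G}_{D}(\xi)\cdot|G|$ piece and a $(-1)^{|D||\xi|}\,\xi\cdot D(|G|)$ piece. For the first piece, property~2 moves the coefficient $D^{\alpha}$ from the derivation slot onto $\xi$ as $(-1)^{|\partial_{\alpha}||\xi|}\mathcal{L}^{G}_{\partial_{\alpha}}(\xi\cdot D^{\alpha})$; a second Leibniz step together with property~3 (which annihilates $\mathcal{L}^{G}_{\partial_{\alpha}}(\xi)$ for each coordinate derivation) collapses the parity signs in pairs and yields $\mathcal{L}^{G}_{D}(\xi) = \xi\cdot\partial_{\alpha}(D^{\alpha})$.

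For the second piece, since $|G|$ depends only on the even coordinates, $D(|G|)=D^{i}\partial_{i}(|G|)$; because $\partial_{i}(|G|)$ is even, supercommutativity in $\wedge\mathcal{E}$ lets me rewrite this as $\partial_{\alpha}(|G|)\,D^{\alpha}$ (odd-index contributions vanishing). Substituting back, pulling $|G|$ through to obtain a common left factor $\xi\cdot|G|$, and using that $\xi$ generates $\mathrm{Ber}(\wedge\mathcal{E})$ freely as a rank-one right $\wedge\mathcal{E}$-module (so one can cancel $\xi$ on both sides) while $|G|$ is invertible, I isolate $\mathrm{div}_{B}(D)$. The formula then emerges by recognising the two surviving terms through the ordinary super-Leibniz identity
\[
\frac{\partial}{\partial x^{\alpha}}\bigl(|G|\cdot D^{\alpha}\bigr) = \partial_{\alpha}(|G|)\,D^{\alpha} + |G|\,\partial_{\alpha}(D^{\alpha}).
\]

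The main obstacle is the sign bookkeeping: property~2 produces parity factors $(-1)^{|\partial_{\alpha}||\xi|}$ that differ across the even and odd ranges of $\alpha$, while the Leibniz step on $|G|$ produces a separate $(-1)^{|D||\xi|}$; the bulk of the work is to verify that, once summed over all $\alpha$ and after the freeness cancellation, these local signs reorganise into the single global prefactor $(-1)^{|D||\xi|}$ demanded by the statement.
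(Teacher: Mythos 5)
Your overall framework is the same as the paper's: write $D=D^{\alpha}\partial_{\alpha}$ (with $\partial_{\alpha}=\partial/\partial x^{\alpha}$), expand $\mathcal{L}^{G}_{D}(\xi\cdot|G|)$ using the three listed properties of the Berezinian Lie derivative, and compare with the defining relation $\mathcal{L}^{G}_{D}(\xi_{G})=(-1)^{|D||\xi|}\xi_{G}\cdot\mathrm{div}_{B}(D)$. The problem lies exactly in the step you defer as ``the bulk of the work'': along the route you chose the signs do \emph{not} reorganise into a single global prefactor. Carrying out your plan literally, the first piece gives $\mathcal{L}^{G}_{D}(\xi)\cdot|G|=\xi\cdot\partial_{\alpha}(D^{\alpha})\cdot|G|$ with no sign (the two factors $(-1)^{|\alpha||\xi|}$ cancel, as you say), while the second piece carries the Leibniz sign $(-1)^{|D||\xi|}\,\xi\cdot\partial_{\alpha}(|G|)\,D^{\alpha}$. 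Hence
\[
\mathcal{L}^{G}_{D}(\xi_{G})=\xi\cdot\Bigl(|G|\,\partial_{\alpha}(D^{\alpha})+(-1)^{|D||\xi|}\,\partial_{\alpha}(|G|)\,D^{\alpha}\Bigr),
\]
and the two terms inside the bracket carry \emph{different} prefactors whenever $|D|$ and $|\xi|$ are both odd, so they cannot be recombined through $\partial_{\alpha}(|G|D^{\alpha})=\partial_{\alpha}(|G|)D^{\alpha}+|G|\,\partial_{\alpha}(D^{\alpha})$ into a multiple of $\partial_{\alpha}(|G|D^{\alpha})$. What your computation actually yields is $\mathrm{div}_{B}(D)=(-1)^{|D||\xi|}\partial_{\alpha}(D^{\alpha})+\frac{1}{|G|}\partial_{\alpha}(|G|)D^{\alpha}$, which agrees with the claimed formula only when $(-1)^{|D||\xi|}=1$.

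The source of the trouble is that properties 1 and 2 of $\mathcal{L}^{G}$ do not commute as rewriting rules when $\xi$ is odd: splitting off $|G|$ by Leibniz \emph{before} absorbing the coefficients $D^{\alpha}$ into the section is not equivalent to doing it afterwards. The paper's proof takes the opposite order: it first applies property 2 to the full section $\xi_{G}=\xi\cdot|G|$, turning $\mathcal{L}^{G}_{D^{\alpha}\partial_{\alpha}}(\xi\cdot|G|)$ into $(-1)^{|\alpha||\xi|}\mathcal{L}^{G}_{\partial_{\alpha}}\bigl(\xi\cdot(|G|D^{\alpha})\bigr)$, and only then applies Leibniz once with $a=|G|D^{\alpha}$ kept together; the two factors $(-1)^{|\alpha||\xi|}$ cancel and one lands directly on $\xi\cdot\partial_{\alpha}(|G|D^{\alpha})$, the prefactor $(-1)^{|D||\xi|}$ appearing only at the very end from the definition of $\mathrm{div}_{B}$. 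To repair your argument you should adopt that order of operations, i.e.\ perform a single Leibniz step with the composite coefficient $|G|D^{\alpha}$ rather than two separate ones.
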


\begin{proof} It goes as follows:
\begin{eqnarray*}
\mathcal{L}_{D}^{G}\xi_{G}
& = & \mathcal{L}_{D^{\alpha}
\frac{\partial}{\partial x^{\alpha}}}^{G}(\xi\cdot|G|)\\
& = & (-1)^{\alpha|\xi|}
\mathcal{L}_{\frac{\partial}{\partial x^{\alpha}}}^{G}
(\xi\cdot|G|\cdot D^{\alpha})\\
& = & (-1)^{\alpha|\xi|}\left(
\mathcal{L}_{\frac{\partial}{\partial x^{\alpha}}}^{G}
(\xi)\cdot(|G|\cdot D^{\alpha})
+(-1)^{\alpha|\xi |}\xi \cdot
\frac{\partial}{\partial x^{\alpha}}(|G|\cdot D^{\alpha})
\right)\\
& = & \xi\cdot\frac{\partial}{\partial x^{\alpha}}
(|G|\cdot D^{\alpha})\\
& = & \xi\cdot|G|\cdot\frac{1}{|G|}
\frac{\partial}{\partial x^{\alpha}}(|G|\cdot D^{\alpha})\\
& = & \xi_{G} \cdot\frac{1}{|G|}
\frac{\partial}{\partial x^{\alpha}}(|G|\cdot D^{\alpha}).
 \end{eqnarray*}
\end{proof}

Finally, from (\ref{gradient}) and (\ref{divergence}), we obtain
the expression for the super-Laplacian of a superfunction
$f\in\wedge\mathcal{E}$,
namely,

\begin{proposition}
For any superfunction $f\in\wedge\mathcal{E}$,
the equation $\Delta(f)=0$ is equivalent to:
\[
\frac{1}{|G|}\frac{\partial|G|}{\partial x^{i}}g^{ij}f_{-j}
+\frac{\partial g^{ij}}{\partial x^{i}}f_{-j}+g^{ij}f_{i,-j}+g^{ij}f_{-i,j}=0.
\]
\end{proposition}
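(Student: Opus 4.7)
The plan is to simply combine the two previous propositions: substitute the explicit form of $\mathrm{grad}\,f$ given by (\ref{gradient}) into the divergence formula (\ref{divergence}), and expand via the (graded) Leibniz rule. The work amounts to bookkeeping, so I would lay it out as a single computation split into two pieces according to whether the contracted index is even or odd.

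First, I would read off the components of $D=\mathrm{grad}\,f$ from (\ref{gradient}): $D^{i}=g^{ij}f_{-j}$ (even components) and $D^{-l}=g^{kl}f_{k}$ (odd components). Then I would apply (\ref{divergence}) and, up to the overall sign $(-1)^{|D||\xi|}$ which is irrelevant since we are going to equate the result to zero, obtain
\[
\Delta f \;=\; \frac{1}{|G|}\Bigl(\frac{\partial}{\partial x^{i}}\bigl(|G|\,g^{ij}f_{-j}\bigr) + \frac{\partial}{\partial x^{-l}}\bigl(|G|\,g^{kl}f_{k}\bigr)\Bigr).
\]

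Next I would expand each term. For the even part, the ordinary Leibniz rule gives three summands,
\[
\frac{1}{|G|}\frac{\partial |G|}{\partial x^{i}}g^{ij}f_{-j} + \frac{\partial g^{ij}}{\partial x^{i}}f_{-j} + g^{ij}f_{i,-j},
\]
which already accounts for the first three terms in the statement. For the odd part, the crucial remark (made just after the definition of $\xi_{G}$) is that $|G|$ depends only on the even coordinates $x^{i}$; the same is true of the components $g^{kl}$ of the classical metric on $M$. Hence $\partial_{x^{-l}}|G|=0$ and $\partial_{x^{-l}}g^{kl}=0$, leaving only
\[
g^{kl}f_{-l,k},
\]
which, after a relabeling of dummy indices ($k\leftrightarrow j$, $l\leftrightarrow i$) and the symmetry $g^{ij}=g^{ji}$, yields the fourth summand $g^{ij}f_{-i,j}$.

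The only conceptual subtlety I anticipate is the sign prefactor $(-1)^{|D||\xi|}$ in the divergence formula together with the correct interpretation of the mixed second partials $f_{i,-j}$ and $f_{-i,j}$ for a possibly inhomogeneous $f$; but since $\Delta f=0$ is a homogeneous equation, any overall sign disappears, and since the statement only requires the local vanishing of the bracketed expression, the graded commutation of $\partial_{x^{i}}$ with $\partial_{x^{-j}}$ (both mixed partials agree as operators, up to the same sign on both sides) causes no issue. Thus nothing beyond routine calculation is required to close the argument.
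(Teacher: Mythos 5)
Your proposal is correct and follows essentially the same route as the paper: substitute the components of $\mathrm{grad}\,f$ from (\ref{gradient}) into the divergence formula (\ref{divergence}), split the sum over $\alpha$ into even and odd indices, expand by the Leibniz rule, and use that $|G|$ and $g^{ij}$ depend only on the even coordinates to kill the extra terms in the odd block. Your explicit remarks about discarding the overall sign $(-1)^{|D||\xi|}$ and about inhomogeneous $f$ are points the paper glosses over, but they do not change the argument.
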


\begin{proof} By a direct computation:
{\small
\begin{eqnarray*}
(-1)^{(|f|+1)|\xi|}\Delta(f)
& = & (-1)^{(|f|+1)|\xi|}\mathrm{div}(\mathrm{grad}f)\\
& = & \frac{1}{|G|}\frac{\partial}{\partial x^{\alpha}}
(|G|\cdot(\mathrm{grad}f)^{\alpha})\\
& = & \frac{1}{|G|}\left( \frac{\partial}{\partial x^{i}}
\left(|G|\cdot g^{ij}\frac{\partial f}{\partial x^{-j}}\right)
+\frac{\partial}{\partial x^{-i}}\left(|G|\cdot g^{ik}
\frac{\partial f}{\partial x^{k}}\right) \right) \\
& = & \frac{1}{|G|}\frac{\partial|G|}{\partial x^{i}}g^{ij}f_{-j}
+\frac{\partial g^{ij}}{\partial x^{i}}f_{-j}+g^{ij}f_{i,-j}+g^{ij}f_{-i,j}.
\end{eqnarray*}
}
 \end{proof}

Separating the even and odd factors, we obtain (\ref{eulero}) precisely
when evaluated on sections of
$p\colon \mathbb{R}^{1|1}\times(M,\wedge \mathcal{E})\to (M,\wedge\mathcal{E})$.
We have thus achieved the following characterization:

\begin{theorem}
The harmonic superfunctions are precisely the solutions
to the Euler-Lagrange equations of the $(1|1)$-supersymmetric sigma
model.
\end{theorem}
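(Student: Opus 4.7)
The plan is to identify a section $\sigma$ of the graded submersion $p$ with the pair $(\phi,\psi)=(t\circ\sigma,\tau\circ\sigma)$, where $\phi\in(\wedge\mathcal{E})_{0}$ is an even superfunction and $\psi\in(\wedge\mathcal{E})_{1}$ is an odd superfunction on $(M,\wedge\mathcal{E})$, and then to show that under this identification the two Euler--Lagrange equations (\ref{eulero}) decouple into independent equations for $\phi$ and for $\psi$, each of which is exactly the local expression for $\Delta f=0$ obtained in the preceding proposition.

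First, I would evaluate (\ref{eulero}) on a generic section by substituting $t_{\alpha}=\partial\phi/\partial x^{\alpha}$ and $\tau_{\beta}=\partial\psi/\partial x^{\beta}$, using that the partials $\partial/\partial x^{i}$ and $\partial/\partial x^{-j}$ commute as operators on independent supercoordinates. This recasts the first equation of (\ref{eulero}) into
\[
\frac{1}{|G|}\frac{\partial|G|}{\partial x^{i}}g^{ij}\psi_{-j}+\frac{\partial g^{ij}}{\partial x^{i}}\psi_{-j}+g^{ij}\psi_{i,-j}+g^{ij}\psi_{-i,j}=0,
\]
and the second equation into the same identity with $\psi$ replaced by $\phi$.

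Second, I would juxtapose these with the previous proposition, which states that for any $f\in\wedge\mathcal{E}$ the condition $\Delta f=0$ reads locally
\[
\frac{1}{|G|}\frac{\partial|G|}{\partial x^{i}}g^{ij}f_{-j}+\frac{\partial g^{ij}}{\partial x^{i}}f_{-j}+g^{ij}f_{i,-j}+g^{ij}f_{-i,j}=0.
\]
This expression is $\mathbb{R}$-linear and parity-preserving in $f$, so specialising to $f=\phi$ and $f=\psi$ reproduces exactly the two equations displayed above. Hence a section $\sigma=(\phi,\psi)$ is critical for the $(1|1)$-sigma model if and only if both $\phi$ and $\psi$ are harmonic superfunctions in the sense of the super-Laplacian, which is the assertion of the theorem.

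The main obstacle I anticipate is the parity bookkeeping in the evaluation step, rather than any conceptual difficulty: one must fix the conventions for left/right partial derivatives of the Lagrangian with respect to the odd jet coordinates $t_{-j}$ and $\tau_{-j}$ consistently with the sign prescriptions used in the derivation of (\ref{eulero}), and exploit the crucial fact that $|G|$ depends only on the even base coordinates, so that the formal total derivatives $d/dx^{-j}$ do not act on $|G|$ and no spurious terms appear. Once this calibration is performed, matching the two systems reduces to a relabeling of summation indices and the claim follows.
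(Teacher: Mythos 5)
Your proposal is correct and follows essentially the same route as the paper: the paper's own proof consists precisely in evaluating the Euler--Lagrange equations (\ref{eulero}) on a section $\sigma=(\phi,\psi)$ and matching them, term by term after relabeling indices, with the local expression of $\Delta f=0$ from the preceding proposition, the decoupling into an equation for $\phi$ and one for $\psi$ being what the paper calls ``separating the even and odd factors.'' The only quibble is terminological: the super-Laplacian built from an odd metric is parity-\emph{reversing} rather than parity-preserving (each term carries exactly one odd derivative), but since all you actually use is $\mathbb{R}$-linearity and homogeneity, the argument is unaffected.
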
 

\section*{Acknowledgments}

Partially supported by the Ministerio
de Educaci\'on y Ciencia of Spain, under grants MTM$2008-01386$ and 
MTM$2005-04947$. JAV also supported by a FAI-UASLP grant C$07$-FAI-$04-18.20$ and
a SEP-CONACyT project CB (J2) 2007-1 code 78791.

\appendix


\begin{thebibliography}{0}

\bibitem{Aoy94}S. Aoyama, \emph{Quantization of the topological
$\sigma-$model and the master equation of the BV formalism.\/}
Modern Phys.\ Lett.\ \textbf{A9} $(1994)$ no.\ 6, $491-500$.

\bibitem{BatBer08}I. A. Batalin, K. Bering,
\emph{Odd scalar curvature in anti-Poisson
geometry}.\ arXiv hep-th:$0712.3699v2$ ($2$ January $2008$).

\bibitem{QFT99}P. Deligne et al. (Eds.), \emph{Quantum Fields and Strings: 
A Course For Mathematicians}, 2 vols., American Mathematical Society, Providence, 
$1999$.

\bibitem{Eel78}J. Eells, L. Lemaire, \emph{A report on harmonic
maps}. Bull. London Math. Soc. \textbf{10} (1978) 1.

\bibitem{Eel64}J. Eells, J. H. Sampson, \emph{Harmonic mappings
of Riemannian manifolds}. Amer. J. Math. \textbf{86} (1964), 109.

\bibitem{KV02}H. Khudaverdian, T. Voronov, \emph{On odd Laplace operators.\/}
Lett.\ Math.\ Phys.\ \textbf{62} $(2002)$ no.\ 2, $127-142$.

\bibitem{KV041}H. Khudaverdian, T. Voronov,
\emph{On odd Laplace operators II}.\ Geometry,
Topology and Mathematical Physics, $179-205$. Amer.\ Math.\ Soc.\
Transl.\ Ser.\ $2$ \textbf{212}.\ Amer.\ Math. Soc., Providence (RI)
$2004$.

\bibitem{KV042}H. Khudaverdian, T. Voronov, \emph{Geometry of differential operators,
odd Laplacians, and homotopy algebras.\/}
J. Nonlinear Math.\ Phys.\ \textbf{11} $(2004)$
suppl., $217-227$.

\bibitem{KosMon02}Y. Kosmann-Schwarzbach, J. Monterde,
\emph{Divergence operators and odd Poisson brackets.\/}
Ann.\ Inst.\ Fourier \textbf{52} $(2002)$ no.\ 2, $419-456$.

\bibitem{MonSan96}J. Monterde and O. A. Sánchez-Valenzuela, \emph{The exterior 
derivative as a Killing vector field}, Israel J. of Math. \textbf{93} $(1996)$, 
$157-170$.

\bibitem{MonSan97}J. Monterde, O. A. Sánchez-Valenzuela, \emph{Graded
metrics adapted to splittings}. Israel J. Math. \textbf{99} $(1997)$,
$231--270$.

\bibitem{MonVal03}J. Monterde, J. A. Vallejo,
\emph{The symplectic structure of Euler-Lagrange superequations
and Batalin-Vilkoviski formalism.\/} J. Phys.\ A: Math.\ and
General \textbf{A36} $(2003)$ $4993-5009$.

\bibitem{MMV06}J. Monterde, J. Muñoz-Masqué, J. A. Vallejo, \emph{The
Poincaré-Cartan form in superfield theory.\/} Int.\ J. Geom.\ Methods
Mod.\ Phys.\ \textbf{3} $(2006)$ no.\ 4, $775-822$.

\bibitem{MV09}J. Muñoz-Masqué, J.A. Vallejo, \emph{Harmonicity in supermanifolds
and sigma models.} To appear in Proceedings of the XVII Fall Workshop on Geometry
and Physics.

\bibitem{Nis02}S. Nishikawa, \emph{Variational problems in geometry}. Translations of
mathematical monographs {\bf 205} American Mathematical Society. Providence, RI (2002). 

\bibitem{Raw84}J. Rawnsley, \emph{Noether's theorem for harmonic maps}. In \emph{
Differential geometric methods in mathematical physics}. S. Sternberg, editor.
D. Reidel, Dordrecht (1984).

\bibitem{San88}O. A. Sánchez-Valenzuela, \emph{Matrix computations
in linear superalgebra}. Linear Algebra Appl.\ \textbf{111} $(1988)$,
$151-181$.
\bibitem{Sar08}G. Sardanashvily, \emph{Supermetrics on supermanifolds}.
Int. J. Geom. Methods Mod. Phys. {\bf v5} $(2008)$ $271-286$.

\bibitem{Sch93}A. Schwarz, \emph{Geometry of Batalin-Vilkovisky quantization.\/}
Comm.\ Math.\ Phys.\ \textbf{155} $(1993)$ no.\ 2, $249-260$.

\bibitem{Wit90}E. Witten, \emph{A note on the antibracket formalism.\/}
Mod.\ Phys.\ Lett.\ A \textbf{5} no.\ 7 (1990) $487-494$.

\end{thebibliography}
\end{document}